\newcommand\numberthis{\addtocounter{equation}{1}\tag{\theequation}}
\newsavebox\myboxA
\newsavebox\myboxB
\newlength\mylenA
\newcommand*\xoverline[2][0.75]{%
    \sbox{\myboxA}{$\m@th#2$}%
    \setbox\myboxB\null% Phantom box
    \ht\myboxB=\ht\myboxA%
    \dp\myboxB=\dp\myboxA%
    \wd\myboxB=#1\wd\myboxA% Scale phantom
    \sbox\myboxB{$\m@th\overline{\copy\myboxB}$}%  Overlined phantom
    \setlength\mylenA{\the\wd\myboxA}%   calc width diff
    \addtolength\mylenA{-\the\wd\myboxB}%
    \ifdim\wd\myboxB<\wd\myboxA%
       \rlap{\hskip 0.5\mylenA\usebox\myboxB}{\usebox\myboxA}%
    \else
        \hskip -0.5\mylenA\rlap{\usebox\myboxA}{\hskip 0.5\mylenA\usebox\myboxB}%
    \fi}
\def\phi{\varphi}
\def\cC{{\cal C}}
\def\bff{{\mathbf{f}}}
\def\br{{\mathbf{r}}}
\def\bt{{\mathbf{t}}}
\def\bv{{\mathbf{v}}}
\def\bx{{\mathbf{x}}}
\def\by{{\mathbf{y}}}
\def\b0{{\mathbf{0}}}
\def\argmin{\mathop{\mathrm{argmin}}}
\def\argmax{\mathop{\mathrm{argmax}}}
\def\cA{\mathcal{A}}
\def\cB{\mathcal{B}}
\def\cC{\mathcal{C}}
\def\cI{\mathcal{I}}
\def\cN{\mathcal{N}}
\def\cS{\mathcal{S}}
\def\cX{\mathcal{X}}
\DeclarePairedDelimiter\abs{\lvert}{\rvert}%
\def \Re[#1]{#1_{\rm R}}%{\text{Re}(#1)}
\def \Im[#1]{#1_{\rm I}}%{\text{Im}(#1)}
\def \Cpx[#1]{{#1_{\rm C}}}
\def \Cpxe[#1]{\tilde{#1}}
\def \tx {x} % tx scalar case
\def \rx {y} % rx scalar case
\def \txv {\bx} % tx input vector
\def \rxv {\by} % rx output vector
\def \txvl {\txv_{\rm{L}}}
\def \rxvl {\rxv_{\rm{L}}}
\def \snr {\rho}
\def \E {\mathbb{E}} % expectation
\def \Pr {\mathbb{P}} % probability
\def \tn {M} % tx antenna numbers
\def \rn {N} % rx antenna numbers
\def \nv {\bv} % noise vector
\def \sign {\text{sign}}
\def \bfv {\bff} % beamforming vector
\def \arrt {\bt}
\def \arrr {\br}
\def \thetar {\theta_{\rm R}}
\def \thetat {\theta_{\rm T}}
\def\angle{{\operatorname{angle}}}
\def\real{{\operatorname{real}}}
\def\imag{{\operatorname{imag}}}
\def \sL {s_{\rm{L}}}
\def \sA {\cS}
\def \sAL {\cS_{\rm{L}}}
\renewcommand{\angle}{\measuredangle}
\newtheorem{prop}{Property}
\newcommand{\C}{\mathbb{C}}%{{\mbox{\rm $\scriptscriptstyle ^\mid$\hspace{-0.40em}C}}} %o
\begin{document}

\title{Beamforming with Multiple One-Bit Wireless Transceivers
}
\author{\IEEEauthorblockN{Kang Gao, J. Nicholas Laneman, Bertrand Hochwald}
\IEEEauthorblockA{University of Notre Dame\\
Email: \texttt{\{kgao,jnl,bhochwald\}@nd.edu}}}
\maketitle

%this one shows all page numbers
%\thispagestyle{plain}
%\pagestyle{plain}

%this removes the page numbers
\pagenumbering{gobble}

\begin{abstract}

Classical beamforming techniques rely on highly linear transmitters and receivers to allow phase-coherent combining at the transmitter and receiver.  The transmitter uses beamforming to steer signal power towards the receiver, and the receiver uses beamforming to gather and coherently combine the signals from multiple receiver antennas.  When the transmitters and receivers are instead constrained for power and cost reasons to be non-linear one-bit devices, the potential advantages and performance metrics associated with beamforming are not as well understood.  We define beamforming at the transmitter as a codebook design problem to maximize the minimum distance between codewords.  We define beamforming at the receiver as the maximum likelihood detector of the transmitted codeword. We show that beamforming with one-bit transceivers is a constellation design problem, and that we can come within a few dB SNR of the capacity attained by linear transceivers.

\end{abstract}
\IEEEpeerreviewmaketitle
\section{Introduction}
Simple one-bit wireless transceivers are being considered for a variety of cost, size and power-related reasons, especially as mobile wireless communications moves to the millimeter-wave band \cite{singh2009limits,krone2012capacity,mo2015capacity,mo2014high,mezghani2009analysis,mezghani2007ultra, mezghani2008analysis,mollen2016one,mollen2017uplink,choi2016near,li2016channel,mo2014channel,studer2015quantized,gao2017power,saxena2016analysis,li2017downlink,jacobsson2017massive}.  Multiple transceiver chains are being considered to allow beamforming at the transmitter and/or receiver to regain signal energy lost to path and penetration losses at such high carrier frequencies.  Yet it is unclear what it means to beamform with one-bit transceivers.

Classical beamforming techniques that require highly linear transmitters and receivers are well-understood.  They are implemented using high resolution analog-to-digital converters (ADCs) and digital-to-analog converters (DACs). Because such ADCs and DACs are power hungry (for example, a 12-bit 4 Gsample/second ADC (Texas Instruments ADC12J4000) consumes two Watts (2 W) \cite{bib:TI}), low-resolution (especially one-bit) ADCs \cite{singh2009limits,krone2012capacity,mo2015capacity,mo2014high,mezghani2009analysis,mezghani2007ultra, mezghani2008analysis,mollen2016one,mollen2017uplink,choi2016near,li2016channel,mo2014channel,studer2015quantized,gao2017power} and DACs \cite{gao2017power,saxena2016analysis,li2017downlink,jacobsson2017massive} are being considered instead.  With such non-linear devices, the beamforming techniques and corresponding performance metrics are not well-understood.  In this paper, provide some simple techniques and performance metrics.

\section{System Model}
We focus on a model where one-bit quantization is considered at both the transmitter and receiver in a line-of-sight (LOS) channel:
\begin{equation}
\rxv=\sign\left(\sqrt{\frac{\snr}{\tn}}\arrr\arrt^H\txv+\nv\right),
\label{eq:complex_channel_model_LOS_channel}
\end{equation}
where $\tn$ and $\rn$ are the number of transmitters and receivers, $\txv\in\{\pm \frac{1}{\sqrt{2}}\pm\frac{j}{\sqrt{2}}\}^{\tn}$ and $\rxv\in\{\pm1\pm j\}^{\rn}$ are the transmitted and received signals, $\arrt $ and $\arrr$ are the array responses of the transmitter and the receiver, which are vectors with $\tn$ and $\rn$ complex elements whose magnitudes are 1, $\snr$ is the received SNR at each receive antenna, $\nv\in\C^{\rn}$ is the additive complex Gaussian noise with $\nv\sim\cC\cN(0,I)$ and $\nv$ is independent of $\txv, \arrt,$ and $\arrr$.
The function $\sign(\cdot)$ provides the sign of the real and imaginary part of the input as the real and imaginary part of the output.

A quick observation is that we can combine the array response at the transmitter $\arrt $ and the transmitted vector $\txv$ in (\ref{eq:complex_channel_model_LOS_channel}) and get an equivalent single-input multiple-output (SIMO) model:
\begin{equation}
\rxv=\sign\left(\sqrt{\snr}\arrr s+\nv\right),
\label{eq:equivalent_SIMO_model}
\end{equation}
with
\begin{equation}
s=\sqrt{\frac{1}{\tn}}\arrt^H\txv,\quad\txv\in\{\pm\frac{1}{\sqrt{2}}\pm \frac{j}{\sqrt{2}}\}^{\tn},
\label{eq:equivalent_transmitted_signal}
\end{equation}
and $s$ can be considered as the equivalent transmitted symbol in the equivalent SIMO model.

For comparison, we also show the equivalent SIMO model for a linear system:
\begin{equation}
\rxvl=\sqrt{\snr}\arrr\sL+\nv,
\label{eq:equivalent_SIMO_model_linear}
\end{equation}
with
\begin{equation}
\sL=\sqrt{\frac{1}{\tn}}\arrt^H\txvl,\quad \txvl^H\txvl=\tn,
\label{eq:equivalent_transmitted_signal_linear}
\end{equation}
where $\txvl$ is the linear transmitted vector with total power $\tn$, $\sL$ is the equivalent transmitted symbol, $\rxvl$ is the linear received vector, and $\nv\sim\cC\cN(0,I)$ is complex additive Gaussian noise. 

\section{Problem Description}
In classical beamforming techniques, the transmitter steers signal power to the receiver to maximize the distance between transmitted symbols, and the receiver combines the received signal coherently to effectively boost the received SNR and reduce the probability of error in detection. A similar idea can be applied to set up the problem of beamforming with multiple one-bit transceivers, and we compare linear and one-bit beamforming throughout.

%We consider the equivalent SIMO model shown in (\ref{eq:equivalent_SIMO_model}) and (\ref{eq:equivalent_transmitted_signal}) to describe beamforming, where (\ref{eq:equivalent_transmitted_signal}) is considered for beamforming at the transmitter while (\ref{eq:equivalent_SIMO_model}) is considered for beamforming at the receiver.

\subsection{Beamforming at the transmitter}
Transmitter beamforming can be expressed as a codebook design problem.

\subsubsection{classical linear transceivers}
In a classical system, the codebook design problem is
\begin{equation}
    \cB_{\rm{L}}=\argmax_{\substack{\cX\subset \{\txv:\txv^H\txv=\tn\}\\|\cX|=K}}\min_{\substack{\txv_i,\txv_j\in\cX\\ \txv_i\neq \txv_j}} |s_i-s_j|, 
    \label{eq:linear_bf_tx_problem_def}
\end{equation}
where $s_i = \sqrt{\frac{1}{\tn}}\arrt^H\txv_i$ is the corresponding equivalent transmitted symbol of vector $\txv_i$, $K$ is the number of vectors in the codebook.

This problem is equivalent of finding a set of symbols with size $K$ : 
\begin{equation}
    \sAL=\argmax_{\substack{\cC\subset \cA_{\rm{L}},|\cC|=K}}\min_{\substack{s_i,s_j\in\cC\\ s_i\neq s_j}} |s_i-s_j|, 
    \label{eq:linear_bf_tx_problem_def_symbol}
\end{equation}
where
\begin{equation}
    \cA_{\rm{L}}=\{\sL:\sL=\sqrt{\frac{1}{\tn}}\arrt^H\txvl,\txvl^H\txvl=\tn\},
\end{equation}
and then find the corresponding vectors $\txv$ of those symbols.  This is classically solved (approximately) by setting $\txvl=\arrt u$, where $u$ is a symbol generally taken from a standard PSK or QAM constellation.

\subsubsection{one-bit transceivers}
With one bit quantization, similar to (\ref{eq:linear_bf_tx_problem_def}), the design problem is
\begin{equation}
    \cB = \argmax_{\substack{\cX\subset \{\pm\frac{1}{\sqrt{2}}\pm\frac{j}{\sqrt{2}}\}^\tn\\|\cX|=K}}\min_{\substack{\txv_i,\txv_j\in\cX\\ \txv_i\neq \txv_j}} |s_i-s_j|,
    \label{eq:one_bit_bf_tx_problem_def}
\end{equation}
where $s_i = \sqrt{\frac{1}{\tn}}\arrt^H\txv_i$, which is the corresponding equivalent transmitted symbol of vector $\txv_i$.

Similarly, the corresponding equivalent symbol design problem is
\begin{equation}
    \sA=\argmax_{\substack{\cC\subset \cA,|\cC|=K}}\min_{\substack{s_i,s_j\in\cC\\ s_i\neq s_j}} |s_i-s_j|, 
    \label{eq:one_bit_bf_tx_problem_def_symbol}
\end{equation}
where
\begin{equation}
    \cA=\{s:s=\sqrt{\frac{1}{\tn}}\arrt^H\txv,\quad\txv\in\{\pm\frac{1}{\sqrt{2}}\pm \frac{j}{\sqrt{2}}\}^{\tn}\}.
    \label{eq:all_one_bit_symbol}
\end{equation}
There is no equivalent classical solution to this problem, and we discuss some approximate solutions.

\subsection{Beamforming at the receiver}
Beamforming at the receiver minimizes the probability of error in the detection of $s_{\rm{L}} \in \sAL$ or $s\in \sA$.  We consider the maximum-likelihood (ML) detector, which minimizes the probability of error when the input is uniformly distributed.

\subsubsection{classical linear transceivers}
The ML decoder of classical linear transceivers is
\begin{equation}
    \hat{s}_{\rm{L}}=\argmax_{s_{\rm{L}}\in\sAL}\Pr(\rxvl|s_{\rm{L}}).
    \label{eq:prob_def_ML_linear}
\end{equation}
\subsubsection{one-bit transceivers}
The ML decoder of one-bit transceivers is
\begin{equation}
    \hat{s}=\argmax_{s\in\sA}\Pr(\rxv|s).
    \label{eq:prob_def_ML_one_bit}
\end{equation}

\section{Beamforming at the transmitter}
\label{sec:BF_tx}
\subsection{classical linear transceivers}
\label{sec:BF_tx_linear}
The design of codebook $\cB_{\rm{L}}$ shown in  (\ref{eq:linear_bf_tx_problem_def}) is related to the problem of circle packing in a circle \cite{wiki:circle_in_circ} which is an open problem in general. However, if we restrict the symbols $s_{\rm{L}}$ (\ref{eq:equivalent_transmitted_signal_linear}) obtained from $\txvl\in\cB_{\rm{L}}$ to have the largest magnitude, we obtain an approximated solution
\begin{equation}
    {\cB}_{\rm{L}}=\{\arrt e^{j\frac{2\pi n}{K}}, n=0,1,\cdots, K-1\},
\end{equation} 
which is optimum when $K\leq 6$ according to \cite{graham1998dense}. The corresponding alphabet of equivalent transmitted symbols becomes
\begin{equation}
    \cS_{\rm{L}}=\{\sqrt{\tn}e^{j\frac{2\pi n}{K}}, n=0,1,\cdots, K-1\},
\end{equation}
which is a K-PSK modulation with magnitude $\sqrt{\tn}$.

\subsection{one-bit transceivers}
The design of the codebook $\cB$ shown in (\ref{eq:one_bit_bf_tx_problem_def}) requires searching among $4^\tn$ symbols. Rather than do the complete search, we search over a much smaller subset. When $K=2$ and $K=4$, the solution of (\ref{eq:one_bit_bf_tx_problem_def_symbol}) is
\begin{equation}
    \cS=\{\pm s_{\rm{max}}\}, \quad K=2,
    \label{eq:K_2_cS}
\end{equation}
\begin{equation}
    \cS=\{\pm s_{\rm{max}},\pm j s_{\rm{max}}\}, \quad K=4,
    \label{eq:K_4_cS}
\end{equation}
where
\begin{equation}
s_{\rm{max}}=\argmax_{s\in\cA}|s|.
\label{eq:max_mag_s}
\end{equation}
Finding the $\txv$ that corresponds to $s_{\rm{max}}$ can be done by searching all $4^M$ possible vectors to find $\txv_{\rm max}$.  We suggest a simpler method.

We first define a subset of $\{\pm\frac{1}{\sqrt{2}}\pm \frac{j}{\sqrt{2}}\}^{\tn}$ to be
\begin{align*}
    \tilde{\cX}=\{&\txv(\phi):\phi\in[0,2\pi],\real(t_ke^{j\phi})\neq 0,\\ 
    &\imag(t_ke^{j\phi})\neq 0, 1\leq k\leq \tn\},
    \numberthis
    \label{eq:tilde_X_def}
\end{align*}
where $t_k$ is the $k$th element of $\arrt$, and $\txv(\phi)$ is defined as:
\begin{equation}
    \txv(\phi) = \frac{1}{\sqrt{2}}\sign(\arrt e^{j\phi}).
    \label{eq:tx_beamforming_transmitted_vector}
\end{equation}

The corresponding set of equivalent transmitted symbols is defined as
\begin{equation}
    \cS(\tilde{\cX})=\{s:s=\sqrt{\frac{1}{\tn}}\arrt^H\txv,\txv\in\tilde{\cX}\}.
    \label{eq:subset_tx_equiv_symbol}
\end{equation}

%Since $\real(t_ke^{j\phi})\neq 0$ and $\imag(t_ke^{j\phi})\neq 0$ so that there is no ambiguity in (\ref{eq:tx_beamforming_transmitted_vector}). 
Even though we have infinitely many $\phi$ in the interval $[0,2\pi]$, the size of $\tilde{\cX}$ is bounded by $4M$.
Let $\tx_k(\phi)$ be the $k$th element of $\txv(\phi)$.
By varying $\phi$ from $0$ to $2\pi$, the value of $\tx_k(\phi)$ potentially changes 4 times for each $k$, and therefore we will get at most $4\tn$ different $\txv$.

For any complex number $c\in\C$, we have
\begin{equation}
    \angle{(c^*\sign(c))}\in(-\frac{\pi}{4},\frac{\pi}{4}).
\end{equation} 
Therefore,
\begin{align*}
    \angle{(t_k^*\tx_k(\phi))}&=\angle{(e^{j\phi}(t_ke^{j\phi})^*\sign(t_ke^{j\phi}))}\\
    &\in\left(\phi-\frac{\pi}{4},\phi+\frac{\pi}{4}\right),
    \numberthis
    \label{eq:angle_align_prop}
\end{align*}
for any $k$. Also, there is no other vector $\txv\in\{\pm\frac{1}{\sqrt{2}}\pm\frac{j}{\sqrt{2}}\}^{\tn}$ that satisfies $t_k^*\tx_k\in(\phi-\frac{\pi}{4},\phi+\frac{\pi}{4})$ for all $1\leq k \leq \tn$.
We have some properties of the set $\cS(\tilde{\cX})$.

\begin{prop}
For any $s\in\cS(\tilde{\cX})$, we have $|s|> \sqrt{\frac{\tn}{2}}$.
\end{prop}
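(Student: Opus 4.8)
The plan is to exploit the angle-alignment property \eref{eq:angle_align_prop} together with the fact that every summand defining $s$ has unit magnitude, so that $s$ is, up to scaling, a sum of $\tn$ unit vectors that all lie inside a common quarter-plane wedge. Writing $s=\sqrt{\frac{1}{\tn}}\sum_{k=1}^{\tn}t_k^*\tx_k(\phi)$ with $\tx_k(\phi)=\frac{1}{\sqrt{2}}\sign(t_ke^{j\phi})$, I would first record that each term has unit modulus: since $|t_k|=1$ and $\sign(t_ke^{j\phi})=\pm1\pm j$ has magnitude $\sqrt{2}$, we get $|t_k^*\tx_k(\phi)|=1\cdot\frac{1}{\sqrt2}\cdot\sqrt2=1$. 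Thus the task reduces to lower-bounding the magnitude of a sum of $\tn$ unit-modulus complex numbers whose arguments, by \eref{eq:angle_align_prop}, all lie in the open interval $(\phi-\frac{\pi}{4},\phi+\frac{\pi}{4})$.

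The key step is to project the sum onto the central direction $e^{j\phi}$ and use $|z|\ge\real(z)$. This gives
\[
    |s|=\sqrt{\frac{1}{\tn}}\,\Bigl|\sum_{k=1}^{\tn}t_k^*\tx_k(\phi)\Bigr|\ge\sqrt{\frac{1}{\tn}}\,\real\!\Bigl(e^{-j\phi}\sum_{k=1}^{\tn}t_k^*\tx_k(\phi)\Bigr)=\sqrt{\frac{1}{\tn}}\sum_{k=1}^{\tn}\real\bigl(e^{-j\phi}t_k^*\tx_k(\phi)\bigr).
\]
By \eref{eq:angle_align_prop}, the argument of $e^{-j\phi}t_k^*\tx_k(\phi)$ lies strictly in $(-\frac{\pi}{4},\frac{\pi}{4})$, and since each summand has unit modulus its real part equals the cosine of that argument, which is strictly larger than $\cos\frac{\pi}{4}=\frac{1}{\sqrt2}$. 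Summing the $\tn$ strict termwise bounds yields $\sum_{k}\real(e^{-j\phi}t_k^*\tx_k(\phi))>\frac{\tn}{\sqrt2}$, and therefore $|s|>\sqrt{\frac{1}{\tn}}\cdot\frac{\tn}{\sqrt2}=\sqrt{\frac{\tn}{2}}$, which is the claimed bound.

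The only point requiring care, rather than real difficulty, is the strictness of the inequality. It rests on the defining conditions of $\tilde{\cX}$ in \eref{eq:tilde_X_def}, namely $\real(t_ke^{j\phi})\neq0$ and $\imag(t_ke^{j\phi})\neq0$: these exactly exclude the cases in which $t_ke^{j\phi}$ is purely real or purely imaginary, where the argument of $(t_ke^{j\phi})^*\sign(t_ke^{j\phi})$ would equal $\pm\frac{\pi}{4}$ and collapse the open containment in \eref{eq:angle_align_prop} to a boundary equality. I would therefore emphasize that on $\tilde{\cX}$ every $t_ke^{j\phi}$ lies in the interior of a quadrant, so \eref{eq:angle_align_prop} is genuinely open for every $k$, the cosine bound is strict for every term, and the strict conclusion $|s|>\sqrt{\frac{\tn}{2}}$ follows.
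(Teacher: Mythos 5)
Your proof is correct and follows essentially the same route as the paper's: rotate by $e^{-j\phi}$, lower-bound the magnitude by the real part, and use the angle-alignment property \eref{eq:angle_align_prop} to bound each term strictly below by $\cos\frac{\pi}{4}$. Your added remark on why strictness holds (the conditions $\real(t_ke^{j\phi})\neq 0$, $\imag(t_ke^{j\phi})\neq 0$ in \eref{eq:tilde_X_def}) makes explicit a point the paper leaves implicit, but the argument is the same.
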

This result gives a lower bound on the ``beamforming gain" that can be expected with one-bit transceivers.

\begin{proof}
According to (\ref{eq:subset_tx_equiv_symbol}), for any $s\in\cS(\tilde{\cX})$, there $\exists \quad \phi\in[0,2\pi]$, so that
\begin{equation}
    s=\sqrt{\frac{1}{\tn}}\arrt^H\txv(\phi).
\end{equation}
Let $c_k=t_k^*\tx(\phi)_k$, we have
\begin{equation}
    |c_k|=1,\angle{(c_k)}\in(\phi-\frac{\pi}{4},\phi+\frac{\pi}{4}).
\end{equation}
Therefore,
\begin{align*}
    |s|&=\sqrt{\frac{1}{\tn}}\abs{\sum_{k=1}^{\tn}c_k}=\sqrt{\frac{1}{\tn}}\abs{\sum_{k=1}^{\tn}c_k e^{-j\phi}}\\
    &\geq \sqrt{\frac{1}{\tn}}\sum_{k=1}^{\tn}\real{\big(c_k e^{-j\phi}\big)} > \sqrt{\frac{\tn}{2}}.
\end{align*}
\end{proof}

\begin{prop}
$s_{\rm{max}}\in\cS(\tilde{\cX})$ and $|s_{\rm{max}}|\geq\frac{2\sqrt{2\tn}}{\pi}$, where $s_{\rm{max}}$ is defined in (\ref{eq:max_mag_s}).
\end{prop}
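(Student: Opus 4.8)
The plan is to derive both assertions from a single per-coordinate identity. Fix a steering phase $\phi$ and write $t_ke^{j\phi}=e^{j\theta_k}$, which is legitimate since $\abs{t_k}=1$; then $e^{-j\phi}t_k^*=e^{-j\theta_k}$. Using $\tx_k(\phi)=\frac{1}{\sqrt2}\sign(t_ke^{j\phi})=\frac{1}{\sqrt2}\big(\sign(\cos\theta_k)+j\,\sign(\sin\theta_k)\big)$, I would first record
\begin{equation}
\real\!\big(e^{-j\phi}t_k^*\tx_k(\phi)\big)=\tfrac{1}{\sqrt2}\big(\abs{\cos\theta_k}+\abs{\sin\theta_k}\big),
\label{eq:per_coord}
\end{equation}
obtained by multiplying out $e^{-j\theta_k}\big(\sign(\cos\theta_k)+j\,\sign(\sin\theta_k)\big)$ and taking the real part. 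Summing \eqref{eq:per_coord} over $k$ yields the in-phase component $\real\!\big(e^{-j\phi}s(\txv(\phi))\big)=\frac{1}{\sqrt{2M}}\sum_{k=1}^{M}\big(\abs{\cos\theta_k}+\abs{\sin\theta_k}\big)$, where $s(\txv(\phi))$ denotes the symbol of $\txv(\phi)$; both parts of the proposition follow from this.

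For the membership claim $s_{\rm{max}}\in\cS(\tilde{\cX})$ I would use a coordinate-wise (``alignment'') optimality argument. Let $\txv^\star$, with $k$th entry $x_k^\star$, be a vector attaining $s_{\rm{max}}$ in \eqref{eq:max_mag_s}, and let $\psi$ be the phase of $s_{\rm{max}}$, so that $\abs{s_{\rm{max}}}=\real(e^{-j\psi}s_{\rm{max}})=\sqrt{\frac1M}\sum_k\real(e^{-j\psi}t_k^*x_k^\star)$. For each $k$, over the four admissible values $x_k\in\{\frac{1}{\sqrt2}(\pm1\pm j)\}$ the quantity $\real(e^{-j\psi}t_k^*x_k)$ is maximized by $x_k=\frac{1}{\sqrt2}\sign(t_ke^{j\psi})=\tx_k(\psi)$, uniquely so whenever $\real(t_ke^{j\psi})\neq0$ and $\imag(t_ke^{j\psi})\neq0$. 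Hence $\real(e^{-j\psi}s(\txv(\psi)))\geq\abs{s_{\rm{max}}}$, while $\abs{s(\txv(\psi))}\geq\real(e^{-j\psi}s(\txv(\psi)))$ and $\abs{s(\txv(\psi))}\leq\abs{s_{\rm{max}}}$ since $s_{\rm{max}}$ is the global maximizer. The chain forces equality throughout, so $s(\txv(\psi))=s_{\rm{max}}$; since $\txv(\psi)\in\tilde{\cX}$ by \eqref{eq:tilde_X_def}, this gives $s_{\rm{max}}\in\cS(\tilde{\cX})$.

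For the lower bound I would average the in-phase component over the steering phase. Because $\cS(\tilde{\cX})\subseteq\cA$, for every $\phi$ we have $\abs{s_{\rm{max}}}\geq\abs{s(\txv(\phi))}\geq\real(e^{-j\phi}s(\txv(\phi)))=\frac{1}{\sqrt{2M}}\sum_{k=1}^{M}\big(\abs{\cos\theta_k}+\abs{\sin\theta_k}\big)$. As $\phi$ runs over $[0,2\pi]$ each $\theta_k$ sweeps a full period, and $\frac{1}{2\pi}\int_0^{2\pi}\big(\abs{\cos\theta}+\abs{\sin\theta}\big)\,d\theta=\frac{4}{\pi}$. Integrating the inequality $\abs{s_{\rm{max}}}\geq\real(e^{-j\phi}s(\txv(\phi)))$ over $\phi\in[0,2\pi]$ and dividing by $2\pi$ yields $\abs{s_{\rm{max}}}\geq\frac{1}{\sqrt{2M}}\cdot M\cdot\frac{4}{\pi}=\frac{2\sqrt{2M}}{\pi}$, which is exactly the claimed bound.

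The step I expect to be the main obstacle is the degeneracy exception built into $\tilde{\cX}$: the alignment argument identifies $s_{\rm{max}}$ with $\txv(\psi)$ cleanly only when no coordinate sits on a sign boundary, i.e. $\real(t_ke^{j\psi})\neq0$ and $\imag(t_ke^{j\psi})\neq0$ for all $k$ at the maximizing phase $\psi$. There are at most $4M$ offending phases, and at such a $\psi$ the per-coordinate maximizer is tied between two choices. I would resolve this by observing that $\abs{\cos\theta}+\abs{\sin\theta}$ attains its minimum exactly on these boundaries, so a global maximizer of the in-phase sum never lands there, and failing a clean argument I would fall back on a tie-breaking/continuity limit. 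For the lower bound the exception is harmless, since the finitely many boundary phases have measure zero and do not affect the integral.
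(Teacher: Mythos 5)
Your proof is correct, and its membership half takes a genuinely different route from the paper's, so the two are worth contrasting. For the bound $|s_{\rm{max}}|\geq\frac{2\sqrt{2\tn}}{\pi}$ you do essentially what the paper does: lower-bound the magnitude by the in-phase component and average over the steering phase. The paper changes variables to $\beta_k(\phi)=\angle(t_k^*\tx_k(\phi))-\phi$ and integrates $\cos\beta_k$ over a quarter period, while you integrate $\frac{1}{\sqrt{2}}\left(|\cos\theta_k|+|\sin\theta_k|\right)$ (the same per-coordinate quantity) over a full period; both yield the per-coordinate average $\frac{2\sqrt{2}}{\pi}$ and hence the same constant. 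One small advantage of your version: it only needs $|s_{\rm{max}}|\geq|s(\txv(\phi))|$, so the bound is independent of the membership claim, whereas the paper invokes membership to write $|s_{\rm{max}}|=\sqrt{1/\tn}\,\max_\phi|\arrt^H\txv(\phi)|$. For membership, the paper argues by exchange: for any $\txv\notin\tilde{\cX}$, some coordinate has $\angle(t_n^*\tx_n)$ outside $(\phi_A-\frac{\pi}{4},\phi_A+\frac{\pi}{4})$ with $\phi_A=\angle(\arrt^H\txv)$, and replacing that coordinate by the aligned one strictly increases $|\arrt^H\txv|$ (via the expansion of $|A-c_n+\tilde{c}_n|^2$), so no vector outside $\tilde{\cX}$ can be optimal. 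You instead fix the optimizer, rotate by its phase $\psi$, and sandwich: coordinate-wise alignment gives $\real\left(e^{-j\psi}s(\txv(\psi))\right)\geq|s_{\rm{max}}|$, global optimality gives $|s(\txv(\psi))|\leq|s_{\rm{max}}|$, and equality throughout forces $s(\txv(\psi))=s_{\rm{max}}$. Your route avoids the paper's strict-improvement algebra and cosine comparison; the paper's route makes explicit that every vector outside $\tilde{\cX}$ is strictly suboptimal, identifying $\txv_{\rm{max}}$ itself as an element of $\tilde{\cX}$.

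The degeneracy you flag is real, but your first proposed fix is not valid as stated: a sum can attain its maximum at a point where one summand is at its minimum, so ``each boundary phase minimizes its own term'' does not by itself keep the optimizing phase off a boundary. It can be repaired with the kink structure (at a boundary phase the affected terms $|\cos\theta_n|+|\sin\theta_n|$ each have one-sided slopes jumping from $-1$ to $+1$, and all other terms are smooth there, so the one-sided derivative of the sum jumps strictly upward and such a point cannot be a local maximum). However, your continuity fallback is simpler and fully rigorous: since $\txv(\phi)$ is piecewise constant in $\phi$, the one-sided limit $\txv(\psi^{+})$ coincides with $\txv(\phi)$ for a nearby non-boundary $\phi$ and hence lies in $\tilde{\cX}$; a tied coordinate contributes the same in-phase value for either sign choice, so $\real\left(e^{-j\psi}s(\txv(\psi^{+}))\right)\geq|s_{\rm{max}}|$ still holds and your sandwich goes through verbatim. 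You are not missing anything the paper handles: the paper's proof silently assumes $\txv(\phi_A)$ is well defined (i.e., that $\phi_A$ is not a boundary phase) both in its contrapositive step and in its choice of $\tilde{\tx}_n=\frac{1}{\sqrt{2}}\sign(t_ne^{j\phi_A})$, so it glosses over exactly the same edge case.
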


\begin{proof}
Let 
\begin{equation}
\txv_{\rm{max}}=\argmax_{\txv\in\{\pm\frac{1}{\sqrt{2}}\pm \frac{j}{\sqrt{2}}\}^{\tn}} |\arrt^H\txv|,
\end{equation}
we have $s_{\rm{max}}=\sqrt{\frac{1}{\tn}}\arrt^H\txv_{\rm{max}}$. We will first show $\txv_{\rm{max}}\in\tilde{\cX}$ which indicates $s_{\rm{max}}\in\cS(\tilde{\cX})$.

$\forall \txv\notin \tilde{\cX}$, let $A=\arrt^H\txv,\phi_A=\angle{A},c_k=t^*_k\tx_k$. Then, there exists some $n$ so that $\angle{c_n}\notin(\phi_A-\frac{\pi}{4},\phi+\frac{\pi}{4})$. Otherwise, we have $\txv=\txv(\phi_A)\in\tilde{\cX}$.

We replace the $n$th element of $\txv$ with
\begin{equation}
    \tilde{\tx}_{n}=\frac{1}{\sqrt{2}}\sign(t_ne^{j\phi_A}),
    \label{eq:bf_design_element}
\end{equation}
and denote the new vector as $\tilde{\txv}$.

Let $\tilde{c}_n=(t_n^*\tilde{\tx}_n)$, we have $\angle{\tilde{c}_n}\in(\phi_A-\frac{\pi}{4},\phi_A+\frac{\pi}{4})$.

Also,
\begin{align*}
    &|\arrt^H\tilde{\txv}|^2=|A-c_n+\tilde{c}_{n}|^2\\
    =&|A|^2+2\real(A^*\tilde{c}_n)-2\real(A^*c_n)+(2-2\real(c_n^*\tilde{c}_n))\\
    >& |A|^2 + 2|A|\cos(\phi_A-\angle(\tilde{c}_n)) - 2|A|\cos(\phi_A-\angle(c_n))\\
    >&|A|^2.
\end{align*}
Therefore, $|\arrt^H\tilde{\txv}|>|\arrt^H\txv|$, which means $\txv\neq\txv_{\rm{max}}$.

Hence $\txv_{\rm{max}}\in\tilde{\cX}$ and therefore  $s_{\rm{max}}\in\cS(\tilde{\cX})$.

Now, we will prove $|s_{\rm{max}}|\geq\frac{2\sqrt{2\tn}}{\pi}$.
Since $s_{\rm{max}}\in\cS(\tilde{\cX})$, we have
\begin{align*}
    |s_{\rm{max}}|=\sqrt{\frac{1}{\tn}}\max_{\phi}\abs{\arrt^H\txv(\phi)}\\
    \geq \sqrt{\frac{1}{\tn}}\frac{2}{\pi}\int_{-\frac{\pi}{4}}^{\frac{\pi}{4}}\abs{\arrt^H\txv(\phi)}d \phi.
    \numberthis
    \label{eq:max_tx_bf_bound}
\end{align*}

Let $c_k(\phi)=t_k^*\tx_k(\phi)$ and we have
\begin{equation}
    \abs{\arrt^H\txv(\phi)}=\abs{\sum_{k=1}^{\tn}c_k(\phi)e^{-j\phi}}\geq \sum_{k=1}^{\tn}\cos(\beta_k(\phi))
\end{equation}
with 
\begin{equation}
    \beta_k(\phi) = \angle{c_k(\phi)}-\phi\in[-\frac{\pi}{4},\frac{\pi}{4}].
\end{equation}

When $\phi$ covers $[-\frac{\pi}{4},\frac{\pi}{4}]$, $\beta_k(\phi)$ will also cover $[-\frac{\pi}{4},\frac{\pi}{4}]$ for any $1\leq k\leq \tn$. Therefore, (\ref{eq:max_tx_bf_bound}) becomes
\begin{equation}
    |s_{\rm{max}}|\geq \sqrt{\frac{1}{\tn}}\frac{2}{\pi}\sum_{k=1}^{\tn}\int_{-\frac{\pi}{4}}^{\frac{\pi}{4}}\cos(\beta_k)d \beta_k = \frac{2\sqrt{2\tn}}{\pi}.
\end{equation}

\end{proof}

We use the set $\cS(\tilde{\cX})$ whose size is no larger than $4\tn$ to find $K$ symbols to maximize the minimum distance. The set of the symbols selected for transmission is
\begin{equation}
    {\sA}=\argmax_{\substack{\cC\subset \sA(\tilde{\cX})\\|\cC|=K}}\min_{\substack{s_i,s_j\in\cC\\ s_i\neq s_j}} |s_i-s_j|,
    \label{eq:one_bit_bf_tx_subset_search_symbol}
\end{equation}
which is an approximate solution of (\ref{eq:one_bit_bf_tx_problem_def_symbol}).
Specially, when $K=2,4$, we can quickly find $s_{\rm{max}}$ and apply (\ref{eq:K_2_cS}) and (\ref{eq:K_4_cS}) to find $\cS$ defined in (\ref{eq:one_bit_bf_tx_problem_def_symbol}).  Searching over $4M$ possible $\txv$ is clearly much easier than searching over all $4^M$ possible.

Here, we provide an algorithm to quickly obtain $\tilde{\cX}$ and then ${\cS(\tilde{\cX})}$ can be computed through (\ref{eq:subset_tx_equiv_symbol}) directly.  The algorithm computes $\txv\in\tilde{\cX}$ by varying $\phi$ from $-\angle{t_k}-\epsilon$ to $-\angle{t_k}+\epsilon$, where $\epsilon$ is a small positive value.  By symmetry, we only need to locate $\tn$ possible $\phi$ where an element of $\txv(\phi)$ changes to obtain all the vectors of $\tilde{\cX}$. Algorithm \ref{alg:BF_tx} below does the job:

\begin{algorithm}
	\caption{Codebook design}\label{alg:BF_tx}
    \scriptsize
        \begin{algorithmic}
            \Require {$\tn,\arrt$}
            \State {$\#\tn$ the number of transmitters}
            \State {$\#\arrt$ array response at the transmitter}
        	\State {$\epsilon=10^{-6}$;}
        	\For{$k=1,\cdots,\tn$}
        	    \State{$\phi_k=-\angle t_k+\epsilon$;}
        	   \State{$\txv_k=\frac{1}{\sqrt{2}}\sign(\arrt e^{j\phi_k})$;}
        	\EndFor
        	\begin{equation*}
        	    \tilde{\cX}=\{\pm \txv_k,\pm j\txv_k: k=1,2,\cdots,\tn\}
        	\end{equation*}
        	\Ensure {$\tilde{\cX}$}
        \end{algorithmic}
\end{algorithm}

\section{beamforming at the receiver}
\subsubsection{classical linear transceivers}
For a classical linear system, we consider the equivalent SIMO model shown in (\ref{eq:equivalent_SIMO_model_linear}) and maximal-ratio combining (MRC) is applied to solve (\ref{eq:prob_def_ML_linear})
\begin{equation}
    \bfv_{\rm{L}} = \frac{\arrr^H}{\sqrt{\snr}\arrr^H\arrr}
\end{equation}
\begin{equation}
    \hat{s}_{\rm{L}} = \argmin_{\sL\in\sAL}|\bfv_{\rm{L}}^H\rxvl - \sL|
    \label{eq:beamforming_at_rx_detector}
\end{equation}
where $\sAL$ is the alphabet of the equivalent transmitted symbol $\sL$, $\bfv_{\rm{L}}$ is the linear combining beamforming vector, $\hat{s}_{\rm{L}}$ is the estimate of $\sL$. 
\subsubsection{one-bit transceivers}
For one-bit transceivers, based on the equivalent SIMO model shown in (\ref{eq:equivalent_SIMO_model}), the ML decoder in (\ref{eq:prob_def_ML_one_bit}) can be written as
\begin{equation}
    \hat{s}=\argmax_{s\in\sA}\sum_{k=1}^{\rn}\log \Pr(\rx_k|s),
    \label{eq:ML_detector_one_bit}
\end{equation}
where $\sA$ is the alphabet of the equivalent transmitted symbol $s$, $\rx_k\in\{\pm1\pm j\}$ is the $k$th element of $\rxv$.

Let $\rx_{\rm{R},k}$ and $\rx_{\rm{I},k}$ to be the real part and imaginary part of $\rx_k$. Then, according to model (\ref{eq:equivalent_SIMO_model}), we have
\begin{equation}
    \log \Pr(\rx_k|s) = \log \Pr(\rx_{\rm{R},k}|s) + \log \Pr(\rx_{\rm{I},k}|s),
\end{equation}
with
\begin{equation}
    \Pr(\rx_{\rm{R},k}|s) = Q(-\sqrt{2\snr}\rx_{\rm{R},k}\cdot\real{(r_k s)})
    \label{eq:real_y}
\end{equation}
\begin{equation}
    \Pr(\rx_{\rm{I},k}|s) = Q(-\sqrt{2\snr}\rx_{\rm{I},k}\cdot\imag{(r_k s)}),
    \label{eq:imag_y}
\end{equation}
where $Q(\cdot)$ is the classical Q-function, $\real(\cdot)$ and $\imag(\cdot)$ output the real and imaginary part of a complex number, $r_k$ is the $k$th element of $\arrr$, which is the array response at the receiver.

Let 
\begin{equation}
    p_{\rm{B},k}(s)=\Pr(\rx_{\rm{B},k}=1|s),q_{\rm{B},k}(s)=\Pr(\rx_{\rm{B},k}=-1|s)
    \label{eq:p_q_in_beamforming}
\end{equation}
with $\rm{B}\in\{\rm{R,I}\}$. Then we have
\begin{equation}
   \Pr(\rx_{\rm{B},k}|s)= (p_{\rm{B},k}(s))^\frac{1+\rx_{\rm{B},k}}{2}(q_{\rm{B},k}(s))^\frac{1-\rx_{\rm{B},k}}{2}.
\end{equation}

Therefore,
\begin{align*}
   &\log\Pr(\rx_{\rm{B},k}|s)\\
   &= \frac{1+\rx_{\rm{B},k}}{2}\log p_{\rm{B},k}(s) + \frac{1-\rx_{\rm{B},k}}{2}\log q_{\rm{B},k}(s)\\
   &=\frac{1}{2}\left(\rx_{\rm{B},k}\log\frac{p_{\rm{B},k}(s)}{q_{\rm{B},k}(s)} + \log(p_{\rm{B},k}(s)q_{\rm{B},k}(s))\right). 
\end{align*}

Since
\begin{equation}
    \log \Pr(\rxv|s) = \sum_{k=1}^{\rn}\sum_{\rm{B}\in\{\rm{R},\rm{I}\}}\log \Pr(\rx_{\rm{B},k}|s),
\end{equation}
we have
\begin{equation}
    \log \Pr(\rxv|s) = \frac{1}{2}\left(\real(\bfv^H(s)\rxv)+d(s)\right),
    \label{eq:beamforming_at_rx_log_likelihood}
\end{equation}
where the $k$th element of the beamforming vector $\bfv(s)$
\begin{equation}
    [\bfv(s)]_k = \log\frac{p_{\rm{R},k}(s)}{q_{\rm{R},k}(s)} + j\log\frac{p_{\rm{I},k}(s)}{q_{\rm{I},k}(s)},
\end{equation}
and the offset $d(s)$ is defined as
\begin{equation}
    d(s)=\sum_{k=1}^{\rn}\sum_{\rm{B}\in\{\rm{R},\rm{I}\}}\log(p_{\rm{B},k}(s)q_{\rm{B},k}(s)),
\end{equation}
where $p_{\rm{B},k}(s)$ and $q_{\rm{B},k}(s)$ are defined in (\ref{eq:p_q_in_beamforming}).

Then, the ML detector in (\ref{eq:ML_detector_one_bit}) becomes
\begin{equation}
    \hat{s}=\argmax_{s\in\sA}\left(\real{(\bfv^H(s)\rxv)}+d(s)\right).
\end{equation}
The log-likelihood function of the transmitted symbol $s$ can be computed through (\ref{eq:beamforming_at_rx_log_likelihood}), which provides soft information for decoding an outer channel code.

\section{Example with Uniform Linear Arrays}
\label{sec:numerical_result}
We consider a system using 
uniform linear arrays (ULA) with adjacent distance $\frac{\lambda}{2}$ at both the transmitter and the receiver, where $\lambda$ is the wavelength of the carrier. 
According to \cite{constantine2005antenna}, the array response becomes
\begin{align*}
\arrt &=[1,e^{j\pi\sin\thetat},e^{j2\pi\sin\thetat},\cdots,e^{j(\tn-1)\pi\sin\thetat}]^T.\\
\arrr&=[1,e^{j\pi\sin\thetar},e^{j2\pi\sin\thetar},\cdots,e^{j(\rn-1)\pi\sin\thetar}]^T,
\end{align*}
where $\thetat$ and $\thetar$ are the angle of departure (AoD) and the angle of arrival (AoA). We assume $\thetat=10^\circ$ and $\thetar=10^\circ$ in our examples and consider $\tn=\rn=8$ first and then $\tn=\rn=40$ operating at a lower SNR.

\subsection{$\tn=\rn=8$}
We first consider $\tn=\rn=8$. According to (\ref{eq:all_one_bit_symbol}), we have $4^{\tn}=65536$ possible symbols $s\in\cA$ to choose from. Using Algorithm \ref{alg:BF_tx}, we can get $\tilde{\cX}$ and obtain $\cS(\tilde{\cX})$ from (\ref{eq:subset_tx_equiv_symbol}), which has no more than $4\tn=32$ symbols. The scatter plot of all $s\in\cA$ are shown in a complex plane  in Fig. \ref{fig:QPSK_beamforming_pattern}, where 32 red dots represent the symbols $s\in\cS(\tilde{\cX})$, while all the other possible symbols are in green. Their magnitudes are very close to $\sqrt{\tn}$, which is the maximum magnitude of equivalent transmitted symbols of linear transceivers.

\begin{figure}
\centering
    \includegraphics[width=3.5in]{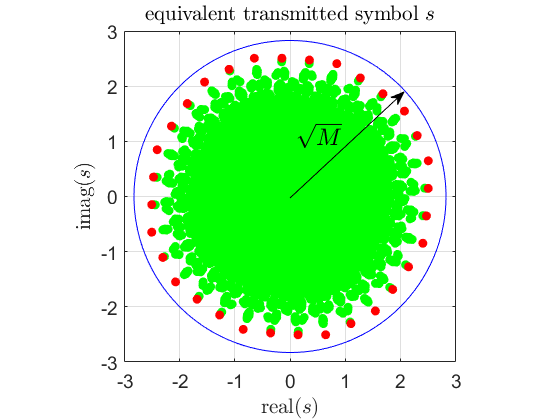}
    \caption{Scatter plot of all $4^\tn=65536$ equivalent transmitted symbols $s\in\cA$ in (\ref{eq:all_one_bit_symbol}) when $\thetat=10^\circ$ and $\tn=8$. The $4M=32$ red dots represent the equivalent transmitted symbols $s\in\cS(\tilde{\cX})$ in (\ref{eq:subset_tx_equiv_symbol}) with $\tilde{\cX}$ obtained through Algorithm \ref{alg:BF_tx}. The blue circle has radius $\sqrt{\tn}$, which is the maximum achievable magnitude of the equivalent transmitted scalar in a linear system with a PSK constellation.}
    \label{fig:QPSK_beamforming_pattern}
\end{figure}

Based on $\tilde{\cX}$ obtained from Algorithm \ref{alg:BF_tx}, we consider $K=2,4,8$ and solve (\ref{eq:one_bit_bf_tx_subset_search_symbol}) for the set of the selected symbols ${\cS}$.  So that the receiver can decode the symbols in ${\cS}$ without knowing the transmitter codebook (which depends on $\arrt$), we desire that the resulting constellation have a regular pre-agreed upon PSK structure.  Fig.\ref{fig:constellation} (red dots) shows the result of choosing QPSK $(K=4)$ and 8-PSK $(K=8)$.  They appear ``rotated", but any such rotation can easily by absorbed into the channel.

\begin{figure}
\centering
    \includegraphics[width=3.4in]{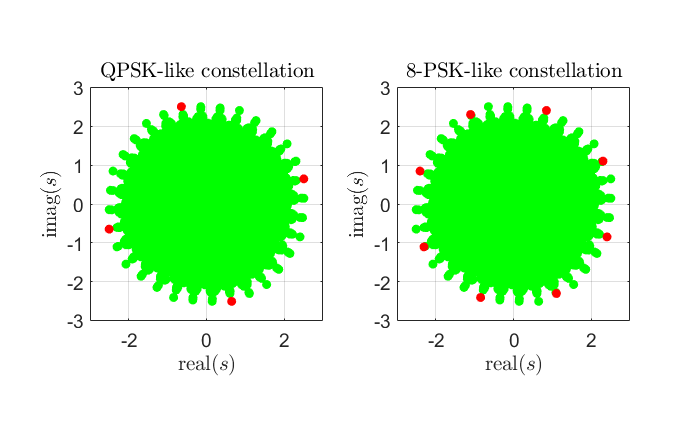}
    \caption{We solve (\ref{eq:one_bit_bf_tx_subset_search_symbol}) to get ${\cS}$ for $K=4$ and $K=8$ (red dots).  The other available symbols $s\in\cA$ (\ref{eq:all_one_bit_symbol}) are in green, and are not used.  For $K=4$, we obtain a QPSK-like constellation, and $K=8$ gives us an 8-PSK-like constellation.}
    \label{fig:constellation}
\end{figure}

We are also able to obtain the mutual information between the input $s$ and the output $\rxv$ when $s$ is uniform input among the BPSK-like (K=2), QPSK-like (K=4), or 8-PSK-like (K=8) constellations.  We have
\begin{equation}
\cI(s\in{\cS};\rxv)=  \E_{s}\left[\sum_{\rxv}\Pr(\rxv|s)\log_2\frac{\Pr(\rxv|s)}{\E_{s}[\Pr(\rxv|s)]}
    \right]
    \label{eq:mutual_info_PSK_like}
\end{equation}
with $s$ uniform distributed among ${\cS}$, and $\Pr(\rxv|s)$ can be easily obtained from the model (\ref{eq:equivalent_SIMO_model}).

Also, we can compute the channel capacity of the system modeled in (\ref{eq:equivalent_SIMO_model}) and (\ref{eq:equivalent_transmitted_signal}), which is equivalent to a discrete memoryless channel (DMC) with $4^\tn$ input and $4^\rn$ output, using Blahut-Arimoto algorithm \cite{arimoto1972algorithm,blahut1972computation}.
This is compared with the channel capacity of a system with linear transceivers, as modeled in (\ref{eq:equivalent_SIMO_model_linear}) and (\ref{eq:equivalent_transmitted_signal_linear}), which is
\begin{equation}
    C_L = \log_2(1+\tn\rn\snr).
    \label{eq:linear_capacity}
\end{equation}

The results are shown in Fig. \ref{fig:QPSK_beamforming_vs_capacity_low_SNR}. We can see that the gap of the channel capacity between the linear transceivers and one-bit transceivers is smaller than 4 dB when the SNR (per receive antenna) is smaller than -10 dB.  We also observe that the BPSK, QPSK, and 8-PSK-like constellations do well at low SNR.
\begin{figure}
\centering
    \includegraphics[width=3.5in]{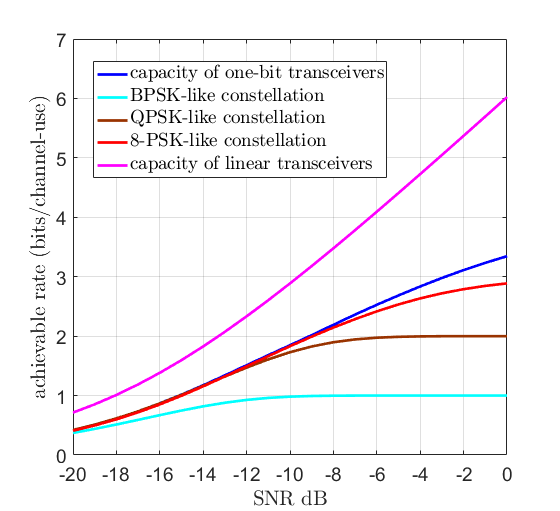}
    \caption{Comparison of the achievable rates using equivalent transmitted symbols $s\in{\cS}$ with $K=2,4,8$, and the channel capacity of systems with one-bit transceivers and linear transceivers for $M=N=8$. When $K=2,4,8$, we have BPSK-like, QPSK-like, and 8-PSK-like constellations, and the achievable rate is computed through (\ref{eq:mutual_info_PSK_like}), which are shown in light blue, brown, and red. The channel capacity of the one-bit transceivers shown in blue is obtained through the Blahut-Arimoto algorithm, and the channel capacity of linear transceivers shown in pink is obtained from (\ref{eq:linear_capacity}).}
    \label{fig:QPSK_beamforming_vs_capacity_low_SNR}
\end{figure}

We now apply an LDPC code, and use receiver beamforming (maximum likelihood) to examine performance. We use a DVB-S.2 standard LDPC code with block size 64800 and code rate 0.5. We employ bit-interleaved coded modulation (BICM \cite{caire1998bit}) with our 8-PSK-like constellation shown in Fig. \ref{fig:constellation}, where the bits generated by the encoder are interleaved before being mapping to the constellation symbols. Gray codes are used to map 3 bits to those 8 symbols.
%and the mapping is shown in Fig. \ref{fig:bits_mapping}. 
With 3 bits/symbol and 0.5 code rate, the information rate becomes 1.5 bits/channel-use. 
%The set of 8 symbols is the alphabet $\sA$ for transmission. For any $s\in\sA$, the corresponding 3 bits defined as $b_i(s)\in\{0,1\}$ with $1\leq i \leq 3$ can be easily obtained according to the mapping. 
The log-likelihood of each symbol $s\in\sA$ can be computed using (\ref{eq:beamforming_at_rx_log_likelihood}).

The performance is shown in Fig. \ref{fig:M_8_N_8_beamforming_LDPC}, and we observe that we are only 1.3 dB away from the channel capacity of the one-bit transceivers, and only 4.7 dB away from the channel capacity of the linear transceivers.

\begin{figure}
\centering
    \includegraphics[width=3.2in]{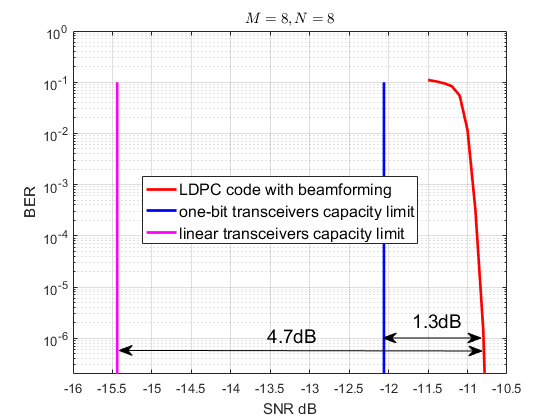}
    \caption{Comparison of the performance of the LDPC code with beamforming for one-bit ULA transceivers, and the channel capacities of one-bit transceivers and linear transceivers.  The information bit rate is 1.5 bits/channel-use with $\tn=\rn=8$ in a LOS channel with $\thetat=\thetar=10^\circ$. The pink and blue vertical lines show the corresponding SNR when the capacity achieve 1.5 bits/channel-use, which can be found from Fig.~\ref{fig:QPSK_beamforming_vs_capacity_low_SNR}. The red curve shows the bit error rate of the LDPC code with code rate 0.5 and 8-PSK-like constellation for transmission.  The gap between the capacity of linear transceivers and the beamformed one-bit transceivers is only 4.7 dB.}
    \label{fig:M_8_N_8_beamforming_LDPC}
\end{figure}

\subsection{$\tn=\rn=40$}
Since the size of $\cS(\tilde{\cX})$ increases linearly with $\tn$, and the complexity of beamforming at the receiver (ML decoder) increases linearly with $\rn$, we may also consider large $\tn$ and $\rn$. 
For example, we consider $\tn=\rn=40$ with the same LOS channel with $\thetat=\thetar=10^\circ$. We again consider $K=8$ and use Algorithm \ref{alg:BF_tx} to obtain the codewords, and seek an information rate of 1.5 bits/channel-use.  The performance is shown in Fig. \ref{fig:M_40_N_40_beamforming_LDPC}, and we are only 4.4 dB from the channel capacity of linear transceivers.  Note the low per-receiver SNR that can be accommodated.  With one-bit transceivers, we are approximately obtaining the $MN$ beamforming gain that is typically obtained with classical linear transceivers.
\begin{figure}
\centering
    \includegraphics[width=3.2in]{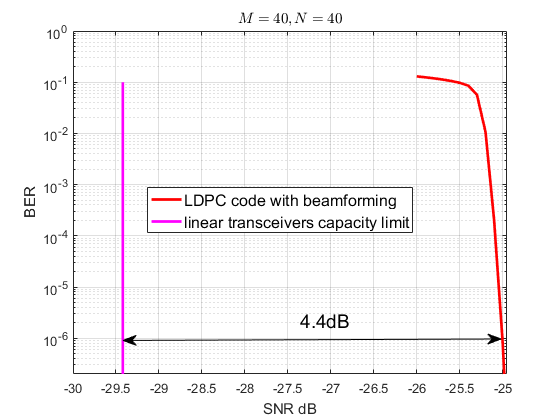}
    \caption{Similar to Fig.~\ref{fig:M_8_N_8_beamforming_LDPC}, except with $M=N=40$.  The gap between the capacity of linear transceivers and the beamformed one-bit transceivers is only 4.4 dB.}
    \label{fig:M_40_N_40_beamforming_LDPC}
\end{figure}

\section*{Acknowledgment}
The authors are grateful for the support of the National Science Foundation, grants ECCS-1731056, ECCS-1509188, and CCF-1403458.

\bibliographystyle{IEEEtran}
\bibliography{IEEEabrv,ITA_2018_refs}

% can use a bibliography generated by BibTeX as a .bbl file
% BibTeX documentation can be easily obtained at:
% http://www.ctan.org/tex-archive/biblio/bibtex/contrib/doc/
% The IEEEtran BibTeX style support page is at:
% http://www.michaelshell.org/tex/ieeetran/bibtex/
%\bibliographystyle{IEEEtran}
% argument is your BibTeX string definitions and bibliography database(s)
%\bibliography{IEEEabrv,refs}
%\thispagestyle{fancy}
\end{document}